\newtheorem{theorem}{Theorem}
\newcolumntype{C}[1]{>{\centering\arraybackslash$}p{#1}<{$}}
\begin{document}

\title{Efficient Large-Scale Quantum Optimization via Counterdiabatic Ansatz}

\author{Jie Liu}
\affiliation{Department of Physics, City University of Hong Kong, Tat Chee Avenue, Kowloon, Hong Kong SAR, China, and City University of Hong Kong Shenzhen Research Institute, Shenzhen, Guangdong 518057, China}
\author{Xin Wang}
\email{x.wang@cityu.edu.hk}
\affiliation{Department of Physics, City University of Hong Kong, Tat Chee Avenue, Kowloon, Hong Kong SAR, China, and City University of Hong Kong Shenzhen Research Institute, Shenzhen, Guangdong 518057, China}
\affiliation{Quantum Science Center of Guangdong-Hong Kong-Macao Greater Bay Area, Shenzhen, Guangdong 518045, China}

\date{\today}

\date{\today}

\begin{abstract}
Quantum Approximate Optimization Algorithm (QAOA) is one of the fundamental variational quantum algorithms, while a version of QAOA that includes counterdiabatic driving, termed Digitized Counterdiabatic QAOA (DC-QAOA), is generally considered to outperform QAOA for all system sizes when the circuit depth for the two algorithms are held equal. Nevertheless, DC-QAOA introduces more CNOT gates per layer, so the overall circuit complexity is a tradeoff between the number of CNOT gates per layer and the circuit depth and must be carefully assessed. In this paper, we conduct a comprehensive comparison of DC-QAOA and QAOA on MaxCut problem with the total number of CNOT gates held equal, and we focus on one implementation of counterdiabatic terms using nested commutators in DC-QAOA, termed as DC-QAOA(NC). We have found that DC-QAOA(NC) reduces the overall circuit complexity as compared to QAOA only for sufficiently large problems, and for MaxCut problem the number of qubits must exceed 16 for DC-QAOA(NC) to outperform QAOA. Additionally, we benchmark DC-QAOA(NC) against QAOA on the Sherrington-Kirkpatrick model under realistic noise conditions, finding that DC-QAOA(NC) exhibits significantly improved robustness compared to QAOA, maintaining higher fidelity as the problem size scales. Notably, in a direct comparison between one-layer DC-QAOA(NC) and three-layer QAOA where both use the same number of CNOT gates, we identify an exponential performance advantage for DC-QAOA(NC), further signifying its suitability for large-scale quantum optimization tasks. We have further shown that this advantage can be understood from the effective dimensions introduced by the counterdiabatic driving terms. Moreover, based on our finding that the 
 optimal parameters generated by DC-QAOA(NC) strongly concentrate in the parameter space, we have devised an instance-sequential training method for DC-QAOA(NC) circuits, which, compared to traditional methods, offers performance improvement while using even fewer quantum resources. Our findings provide a more comprehensive understanding of the advantages of DC-QAOA circuits and an efficient training method based on their generalizability.

\end{abstract}

\maketitle

\section{INTRODUCTION} \label{Intro}

Variational Quantum Algorithms (VQAs) provide an efficient way to tackle complex problems by merging classical optimization techniques with quantum computing circuits  \cite{cerezoVariationalQuantumAlgorithms2021b,kandalaHardwareefficientVariationalQuantum2017a,mitaraiQuantumCircuitLearning2018,mccleanTheoryVariationalHybrid2016,liuVariationalQuantumEigensolver2019,peruzzoVariationalEigenvalueSolver2014a,haugCapacityQuantumGeometry2021a,abbasPowerQuantumNeural2021}. Inspired by adiabatic quantum processes  \cite{comparat2009general}, the Quantum Approximate Optimization Algorithm (QAOA) is a type of VQA particularly suited for combinatorial optimization problems  \cite{hadfieldQuantumApproximateOptimization2019a,zhouQuantumApproximateOptimization2020,zhouQAOAinQAOASolvingLargeScale2023}. QAOA constructs a parametrized quantum circuit (``ansatz'') composed of alternating layers of problem-specific cost functions (encoded as quantum operators) and mixing unitary operations. These parameters are then optimized iteratively using classical optimization methods to approach optimal solutions to a given problem. The capability of QAOA to handle a wide array of optimization problems, including MaxCut, Graph Partitioning, and others, underscores its significance in quantum computing  \cite{zhuAdaptiveQuantumApproximate2022,ruanQuantumApproximateOptimization2023,ohSolvingMultiColoringCombinatorial2019}.

In principle, QAOA becomes more powerful as the number of layers increases \cite{farhiQuantumApproximateOptimization2014}. However, its implementation is challenging for current Noisy Intermediate-Scale Quantum (NISQ) devices due to their limited coherence times. On the other hand, adiabatic quantum processes can be accelerated by adding counterdiabatic terms (cd-terms), known as counterdiabatic driving (cd-driving) in Shortcut to Adiabaticity (STA)  \cite{chenFastOptimalFrictionless2010,vacantiTransitionlessQuantumDriving2014,jiCounterdiabaticTransferQuantum2022,hartmannRapidCounterdiabaticSweeps2019a}. It has been demonstrated that, under certain situations, the performance of QAOA can be enhanced by incorporating cd-terms in a digitized fashion, resulting in the Digitized Counterdiabatic Quantum Approximate Optimization Algorithm (DC-QAOA)  \cite{hegadeShortcutsAdiabaticityDigitized2021,chandaranaDigitizedcounterdiabaticQuantumApproximate2022}.

Recent studies have compared the performance of DC-QAOA and QAOA in a variety of problems, including ground state preparation  \cite{chandaranaDigitizedcounterdiabaticQuantumApproximate2022}, molecular docking  \cite{dingMolecularDockingQuantum2023a}, protein folding  \cite{chandaranaDigitizedCounterdiabaticQuantumAlgorithm2023}, and portfolio optimization  \cite{hegadePortfolioOptimizationDigitized2022}. In all these studies, the comparisons were made when both algorithms have the same number of circuit layers. The fact that DC-QAOA exhibits an advantage over QAOA in these studies suggests that cd-terms effectively reduce the number of circuit layers required to achieve the same level of efficiency. Nevertheless, the circuit complexity for a NISQ device, which is directly linked to the execution time,  depends not only on the number of layers but also the number of two-qubit gates in each layer. While reducing the number of layers required by DC-QAOA to achieve the same efficiency as QAOA, cd-terms introduce more two-qubit gates, substantially complicating the circuit. Therefore, apart from the number of layers considered in the literature, the number of two-qubit gates is also playing a key role in comparing the performances between DC-QAOA and QAOA on actual NISQ devices.

In this paper, we compare DC-QAOA and QAOA on the MaxCut problem  \cite{hadlock1975findinga}, where the number of layers is increased until both algorithms reach the predefined fidelity to the ground state of optimization problems. We use the number of CNOT gates as an indicator of efficiency since it characterizes the difficulty of implementing such algorithms on quantum computers. Using MaxCut on randomly generated unweighted graphs as benchmarks, we found that for graphs with over 16 vertices, DC-QAOA can achieve better performance using fewer two-qubit gates, while QAOA is better for smaller graphs. By closely examining the effective dimension of the two circuits, we find that adding cd-terms in each layer provides more effective dimensions than simply adding more layers to QAOA, and the cd-terms can reduce non-diagonal transitions, thereby improving DC-QAOA's performance. Furthermore, the optimal parameters obtained by training DC-QAOA display a strong concentration in the parameter space, demonstrating potential transferability and generalizability from optimization problems on a smaller scale to those on a larger scale. Based on our findings, we devised an Instance-Sequential Training (IST)  method for DC-QAOA, which enables comparable or superior outcomes with substantially reduced quantum resources. As a result, DC-QAOA emerges as a more viable option for extensive quantum optimization tasks, in contrast to QAOA, which may be more efficient for smaller-scale problems.

The remainder of the paper is organized as follows. In Section \ref{definition}, we introduce the formalism including definitions of QAOA and DC-QAOA. Section \ref{results} provides the results. We show a comprehensive comparison of DC-QAOA and QAOA in Section \ref{scaling}. In Section \ref{suppression}, we discuss how DC-QAOA benefits from cd-terms, and in Section \ref{training}, we demonstrate the parameter concentration phenomenon in DC-QAOA and propose a quantum resource-efficient training method. We conclude in Section \ref{conclude}.

\begin{figure}
    \centering
    \includegraphics[width=\linewidth]{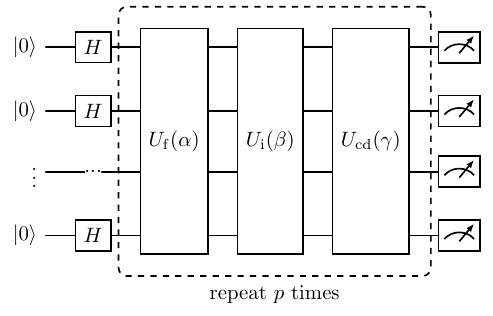}
    \caption{Schematic circuit diagram for DC-QAOA where $\alpha$, $\beta$, $\gamma$ are optimized by a classical optimizer based on a loss function. Here $U_{\mathrm{cd}}(\gamma)$ acts as the cd-term to suppress diabatic transitions. The parameterized block is repeated $p$ times, where $p$ is number of layers of DC-QAOA.}
    \label{fig:Fig1}
\end{figure}

\section{Formalism} \label{definition}

QAOA can be seen as a digitized version of Adiabatic Quantum Computing with trainable parameters  \cite{barendsDigitizedAdiabaticQuantum2016}. The ansatz of QAOA stems from the quantum annealing Hamiltonian:
\begin{equation}
H(t) = [1-\lambda(t)]H_\mathrm{i} + \lambda(t)H_\mathrm{f},
\end{equation}
where $\lambda(t) \in [0,1]$ is the annealing schedule in $t \in [0,1]$. As $\Dot{\lambda} \rightarrow 0$, the ground state of $H_\mathrm{i}$ can evolve adiabatically into the ground state of $H_\mathrm{f}$. Here, $H_\mathrm{i} = \sum_k h_k\sigma_k^x$, allowing the ground state of $H_\mathrm{i}$ to be easily prepared as an equally weighted superposition state in the computational bases, and $H_\mathrm{f}$ can be any Hamiltonian of interest, usually called problem Hamiltonian. Quantum annealing can be digitized with trotterized time evolution:
\begin{equation}
    U(T) = \prod_{j=1}^{p}e^{-i[1-\lambda(j\Delta t)]H_\mathrm{i}\Delta t}e^{-i\lambda(j\Delta t)H_\mathrm{f}\Delta t}.
\end{equation}
Here, $p$ is the number of trotter layers, and when $p$ is sufficiently large, $U(T)$  approaches the adiabatic limit. $U(T)$ can be further parameterized with a set of trainable parameters $(\alpha_1,\alpha_2,...,\alpha_p;\beta_1,\beta_2,...,\beta_p)$ instead of original annealing schedule to counter coherent errors in NISQ devices:
\begin{equation}
    \begin{split}
        U(\Vec{\alpha},\Vec{\beta})=&U_\mathrm{i}(\beta_p)U_\mathrm{f}(\alpha_p)U_\mathrm{i}(\beta_{p-1})U_\mathrm{f}(\alpha_{p-1})...\\
        &\times U_\mathrm{i}(\beta_1)U_\mathrm{f}(\alpha_1),
    \end{split} \label{eq:QAOA}
\end{equation}
where the unitary operators are $U_\mathrm{i}(\beta)=\exp(-i\beta H_\mathrm{i})$ and $U_\mathrm{f}(\alpha)=\exp(-i\alpha H_\mathrm{f})$. By maximizing an objective function $C(\Vec{\alpha},\Vec{\beta})$, defined below in Eq.~\eqref{eq:costfunc} for the MaxCut problem, we minimize the distance between the output state $\ket{\psi_\mathrm{o}}$ and the exact ground state $\ket{\psi_\mathrm{g}}$ of $H_\mathrm{f}$. Hence adiabatic quantum evolution is transformed into an optimization problem with $2p$ parameters.

However, the requirement for a large number of trotter layers $p$ in QAOA (sometimes referred to as ``standard QAOA'' hereafter) to approximate adiabatic evolution poses challenges due to the increased circuit complexity. To address this limitation, DC-QAOA has been developed  \cite{chandaranaDigitizedcounterdiabaticQuantumApproximate2022}. DC-QAOA builds upon analog cd-driving which can shorten the evolution time of adiabatic transitions by adding an extra cd-term, presenting an opportunity to effectively reduce number of layers $p$ in QAOA.
The nature of the cd-terms depends on the choice of appropriate adiabatic gauge potential \cite{Sels}. In the realm of analog cd-driving, the commonly utilized gauge potential is first-order nested commutators
\begin{equation}
    H_{\mathrm{cd}} = i\Gamma(t)[H,\partial_\lambda H].
\end{equation}
In cd-driving, the effective Hamiltonian can be written as 
\begin{equation}
    H_{\mathrm{eff}}(t)=[1-\lambda(t)]H_\mathrm{i}+\lambda(t)H_\mathrm{f}+\Dot{\lambda}H_{\mathrm{cd}}.
\end{equation}
Following the same routine in QAOA, cd-driving can be similarly digitized and parameterized as 
\begin{equation} 
    \begin{split}
        U(\Vec{\alpha},\Vec{\beta},\Vec{\gamma})=&U_{\mathrm{cd}}(\gamma_p)U_\mathrm{i}(\beta_p)U_\mathrm{f}(\alpha_p)\\
        &\times U_{\mathrm{cd}}(\gamma_{p-1})U_\mathrm{i}(\beta_{p-1})U_\mathrm{f}(\alpha_{p-1})...\\
        &\times U_{\mathrm{cd}}(\gamma_1)U_\mathrm{i}(\beta_1)U_\mathrm{f}(\alpha_1).
    \end{split} \label{eq:DC-QAOA}
\end{equation}
A schematic circuit diagram of a typical $p$-layer DC-QAOA is shown in Fig.~\ref{fig:Fig1}. The detailed implementation of three possible cd-terms in DC-QAOA using elementary gates, including first-order nested commutator (hereafter referred to as NC) \cite{chandaranaDigitizedCounterdiabaticQuantumAlgorithm2023}, single qubit Y-rotation (referred to as Y)  \cite{ieva2023COLD}, and two-qubit YY-Interaction (referred to as YY)  \cite{chaiShortcutsQuantumApproximate2022}, can be seen in Fig.~\ref{fig:Fig2}.

To explore the advantage of DC-QAOA over QAOA on quantum optimization problems, we use the MaxCut problem as a benchmark \cite{goemans1995improved}. For the MaxCut problems with an unweighted graph $\mathcal{G}=(V,E)$, where $V$ and $E$ are the vertex and edge sets respectively, the objective function $C(\emph{z})$ is defined on a string of binary values $\emph{z} = (z_1,z_2,...,z_n)$:
\begin{equation}
    C(z) = \frac{1}{2}\sum_{(j,k)\in E}(1-z_j z_k), \label{eq:costfunc}
\end{equation}
which aims to separate the vertices into two subsets so that $C(\emph{z})$ is maximized. Depending on which subset vertex $j$ is in, $z_j$ is assigned a binary value of 0 or 1. The solution is embedded in the ground state of $H_\mathrm{f}$, written as
\begin{equation}
    H_\mathrm{f} = \sum_{(j,k)\in E}\sigma_j^z\sigma_k^z.
\end{equation}
By finding the ground state $\ket{\psi_\mathrm{g}}$ of $H_\mathrm{f}$, we can maximize the objective function $C(\emph{z})$.
\begin{figure}
    \centering
    \includegraphics[width=\linewidth]{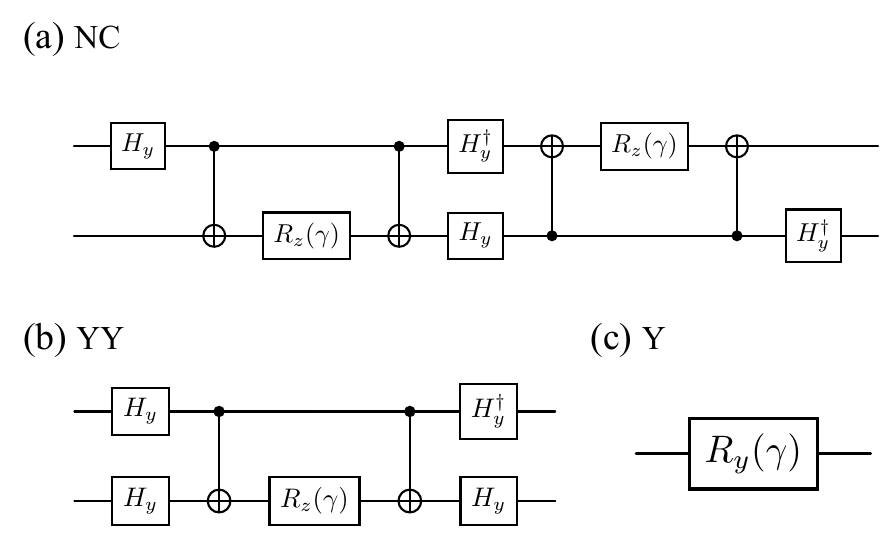}
    \caption{Three possible implementations of the cd-term $U_\mathrm{cd}(\gamma)$ using CNOT gates and single qubit rotation in DC-QAOA considered in this paper. (a) NC: First order nested commutator based on $H_\mathrm{i}$ and $H_\mathrm{f}$ in MaxCut. (b) YY: YY-Interaction. (c) Y: Single qubit Y-rotation. $R_z(\gamma)$ and $R_y(\gamma)$ are single qubit rotations around z- and y-axis respectively, where $\gamma$ is the rotational angle. $H_y=\frac{1}{\sqrt{2}}(Y+Z)$.  }
    \label{fig:Fig2}
\end{figure}

In this study, for each given number of vertices between 9 and 16, we randomly generate one graph, subject to the condition that the edges between every pair of vertices are connected with probability $0.5$. These graphs are shown in the Appendix.

\section{RESULTS} \label{results}

\subsection{Circuit Length Scaling Behavior} \label{scaling}

In the current literature, the comparison between DC-QAOA and QAOA is done holding the number of layers $p$ equal, under which condition it has been claimed that DC-QAOA has superior performance than QAOA in quantum optimization for essentially systems of any size  \cite{chandaranaDigitizedcounterdiabaticQuantumApproximate2022,chandaranaDigitizedCounterdiabaticQuantumAlgorithm2023}. However, from Eq.~\eqref{eq:QAOA} and Eq.~\eqref{eq:DC-QAOA} we can see that DC-QAOA adds a cd-term $U_\mathrm{cd}$ to every layer of QAOA. The inclusion of cd-terms in DC-QAOA brings about an inherent trade-off, adding two-qubit gates to the quantum circuit, thereby heightening the circuit complexity even when the number of layers $p$ remains the same as in QAOA. The effect is even more pronounced for NISQ devices due to their decoherence and limited connectivity among qubits.

For a MaxCut problem on graph $\mathcal{G}$ having $M$ edges, the implementation of $U_\mathrm{f}$ requires $2M$ CNOT gates, and the implementation of $U_\mathrm{cd}$ requires $4M$, $2M$ and $0$ CNOT gates for the nested commutator (NC, Fig.~\ref{fig:Fig2}(a)), YY-interaction (YY, Fig.~\ref{fig:Fig2}(b)) and the single qubit Y-rotation  (Y, Fig.~\ref{fig:Fig2}(c)), respectively. Hereafter, we denote the three versions of the DC-QAOA algorithm which uses NC, YY and Y as cd-terms as DC-QAOA(NC), DC-QAOA(YY), and DC-QAOA(Y), respectively. Note that QAOA has $2M$ CNOT gates, therefore DC-QAOA(NC), DC-QAOA(YY), and DC-QAOA(Y) has $6M$, $4M$, and $2M$ CNOT gates in total respectively. 

It is worth noting that DC-QAOA(NC) utilizes three times as many CNOT gates as QAOA, the most among the three versions of the algorithm. Such excessive usage of two-qubit gates in one layer is even more pronounced when we consider large-scale combinatorial optimization. For example, one-layer DC-QAOA(NC) requires $531867$ two-qubit gates while one-layer QAOA requires only $177289$ two-qubit gates based on the topology of Brisbane of IBM Quantum for a complete graph with 127 vertices, which is the number of vertices demonstrated in the most recent experiment  \cite{exp127}. Therefore only assessing the complexity using number of layers is insufficient.

\begin{table}[h!]
  \begin{center}
    \label{tab:table1}
    \begin{tabular}{|l|c|c|} 
      \hline
      Device & QAOA & DC-QAOA(NC) \\
      \hline
        Brisbane  \cite{lalita2024realizing} & 177289 & 531867\\
        Kyiv  \cite{tsunaki2024ensemble} & 177634 & 532902\\
        Torino  \cite{baglio2024data} & 177944 & 533832\\
    \hline
    \end{tabular}
  \end{center}
  \caption{The number of two-qubit gate of two algorithm when compiled into the topology of real quantum devices provided by IBM Quantum using Qiskit transpile function. The underlying graph of MaxCut problem is a complete graph. The native two-qubit gate being compiled to is Echoed Cross-Resonance Gate for Brisbane and Kyiv and Controlled-Z gate for Torino.}
  \label{tab:table1}
\end{table}

The performance of QAOA has been studied extensively in the context of combinatorial optimization problems, with scaling laws providing valuable insights into its efficiency. In  \cite{akshayCircuitDepthScaling2022}, the authors proposed the logistic saturation conjecture in the context of MAX-2-SAT, demonstrating that the critical depth $p_{\mathrm{crit}}$, the minimum QAOA depth required to achieve a specific level of performance, follows a logistic function with respect to the clause density $\alpha$, This logistic model captures how $p_{\mathrm{crit}}$ initially grows rapidly with increasing $\alpha$, before saturating at a depth $p_\mathrm{max}$, beyond which further increases in circuit depth yield diminishing returns. Additionally, their study showed that $p_\mathrm{max}$ scales linearly with the problem size, suggesting that larger instances require proportionally deeper circuits.

In this work, we adapt these ideas to the MaxCut problem, where the analogous metric to clause density is the edge density of the graph. For a graph with $n$ vertices and $m$ edges, the edge density is defined as $\alpha_\mathrm{e}=\frac{2m}{n(n-1)}$, representing the fraction of possible edges that are present. We hypothesize that the logistic saturation conjecture also applies to MaxCut, with $p_\mathrm{crit}$ changing with the edge density of the graph following logistic function and then saturating at a depth $p_\mathrm{max}$. To support this, we provide a formal proof in Appendix~\ref{appendix_a} establishing the equivalence between Max-k-SAT and MaxCut. Moreover, we explore whether $p_\mathrm{max}$ retains a linear relationship with the graph size, providing further evidence for the generality of this scaling behavior across optimization problems. $p_\mathrm{max}$ provides a tool to quantify the advantage afforded by adding cd-terms in DC-QAOA. It is important to see how fast $p_\mathrm{max}$ scales with problem size. For example, if $p_\mathrm{max}$ of DC-QAOA(NC) scales three times slower than $p_\mathrm{max}$ of QAOA, it means that as the problem size grows, DC-QAOA(NC) can eventually solve an optimization problem utilizing fewer CNOT gates than QAOA. Therefore, it is fair to compare DC-QAOA and QAOA by comparing the scaling rate of $p_\mathrm{max}$ versus the problem size. To find the critical depth $p_\mathrm{max}$ for DC-QAOA and QAOA, we first define the critical depth in this study. In general, for a given graph with edge density $\alpha_\mathrm{e}$, the output state of the variational circuit will not cover the whole Hilbert space, the overlap, or fidelity $F$ between the optimal output state of a $p$-layer QAOA and the actual ground state $\ket{\psi_\mathrm{g}}$ is defined as:
\begin{equation}
   F(p,\alpha_\mathrm{e})=\max_{\vec{\alpha},\vec{\beta}}\left|\bra{\psi_\mathrm{g}}U(\vec{\alpha},\vec{\beta})\ket{0}^{\otimes n}\right|^2,
\end{equation}
and for a $p$-layer DC-QAOA, the fidelity is similarly defined as:
\begin{equation}
	F(p,\alpha_\mathrm{e})=\max_{\vec{\alpha},\vec{\beta},\vec{\gamma}}\left|\bra{\psi_\mathrm{g}}U(\vec{\alpha},\vec{\beta},\vec{\gamma})\ket{0}^{\otimes n}\right|^2.
\end{equation}

Fig.~\ref{fig:Fig3} shows the fidelity $F$, calculated at fixed number of layers $p=20$, for a MaxCut problem with 16 vertices (cf. Fig.~\ref{fig:Fig9}(h)) versus number of training iterations for four different versions of the algorithms. Different algorithms converge to different values of $F$, with a higher value indicating a better performance.
Each panel shows eight color lines, which are 8 runs from random initialization of parameters. The converged value of $F$ for different algorithms are averaged and are shown in Table \ref{tab:table1}.
 Overall, the results from DC-QAOA(NC) (Fig.~\ref{fig:Fig3}(b)) is the best, followed by those from DC-QAOA(YY) (Fig.~\ref{fig:Fig3}(c)). Results from DC-QAOA(Y) (Fig.~\ref{fig:Fig3}(d)) and QAOA (Fig.~\ref{fig:Fig3}(a)) are comparable and are both inferior to the other two methods. These results suggest having more CNOT gates on each layer implies  better performance of the algorithm. 


\begin{figure}
    \centering
    \includegraphics[width=\linewidth]{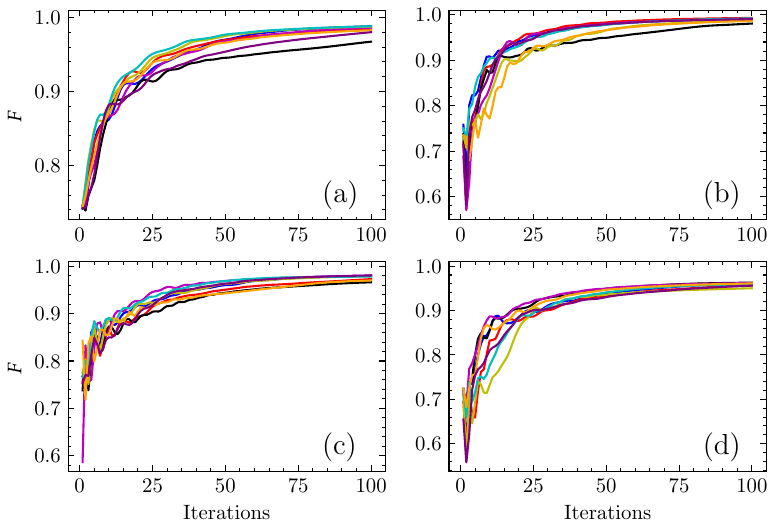}
    \caption{The fidelity $F$, calculated at fixed number of layers $p=20$, for a MaxCut problem versus number of training iterations for four different versions of the algorithms using same number of layers $p=20$. (a) QAOA, (b) DC-QAOA(NC), (c) DC-QAOA(YY) and (d) DC-QAOA(Y). The target graph has 16 vertices. In each panel, the 8 color lines represents 8 random initialization of parameters.}
    \label{fig:Fig3}
\end{figure}

\begin{table}[h!]
  \begin{center}
    \label{tab:table1}
    \begin{tabular}{|l|c|r|} 
      \hline
      Algorithm & Mean & Std. \\
      \hline
        QAOA & 97.85\% & 0.65\%\\
        DC-QAOA(NC) & 98.87\% & 0.38\%\\
        DC-QAOA(YY) & 98.62\% & 0.51\%\\
        DC-QAOA(Y) & 98.30\% & 0.40\%\\
    \hline
    \end{tabular}
  \end{center}
  \caption{The average fidelity $F$ (``Mean'') for four different algorithms, along with their respective standard deviations (``Std.'').}
  \label{tab:table2}
\end{table}

Given some fixed tolerance on performance, $\epsilon>0$, for both algorithm on MaxCut problem defined on graph with edge density $\alpha_\mathrm{e}$, the critical depth $p_\mathrm{crit}$, required to guarantee approximations that fall within an $\epsilon>0$ tolerance, is given by:
\begin{equation}
	p_\mathrm{crit}(\alpha_\mathrm{e})=\min \{p|1-F(p,\alpha_\mathrm{e})<\epsilon\}. \label{eq:critical}
\end{equation}
We can see in Fig.~\ref{fig:new_logistic} that $p_\mathrm{crit}(\rho)$ of both QAOA and DC-QAOA(NC) in MaxCut problem also show an $S$-shaped curve, suggesting that logistic saturation conjecture still holds in this problem. 

To elucidate how the combined effect of the number of CNOT gates and number of layers on the effectiveness of the algorithm, we focus on comparing DC-QAOA(NC), which has the most number of CNOT gates per layer, to QAOA.

Following the procedure in  \cite{akshayCircuitDepthScaling2022}, we fit the $p_\mathrm{crit}(\rho)$ using the following logistic function:
\begin{equation}
	p_\mathrm{crit}(\alpha_\mathrm{e})=\frac{p_\mathrm{max}}{1+e^{k(\alpha_\mathrm{e}-\alpha_\mathrm{e}^{\mathrm{c}})}},
\end{equation}
and in Fig.~\ref{fig:new_scaling}(a) we compare the saturating depth $p_\mathrm{max}$ as functions of the number of qubits $N$, for both QAOA and DC-QAOA(NC). We find that the saturation depth $p_\mathrm{max}$ also scales linearly with the graph size $N$ as shown in the Fig.~\ref{fig:new_scaling}(a). The calculated saturated depths $p_\mathrm{max}$ for both algorithms scale approximately linearly with system size as $p^\mathrm{QAOA}_\mathrm{max} \simeq 3.61N-19.01$ (for QAOA) and $p^\mathrm{DC-QAOA(NC)}_\mathrm{max} \simeq 0.88N-0.51$ (for DC-QAOA(NC)). Therefore, in the limit of solving large scale MaxCut problem on graph $\mathcal{G}$ with $M$ edges, QAOA needs $3.61N$ layers and in turn $3.61N\times2M=7.22NM$ CNOT gates and DC-QAOA(NC) needs $0.82N$ layers and in turn $0.88N\times6M=5.28NM$ CNOT gates. This implies that DC-QAOA(NC) is more advantageous than QAOA in solving large scale optimization problem because the total number of CNOT gates required to achieve the same level of effectiveness is smaller for DC-QAOA(NC), when $N$ is sufficiently large. In other words, the circuit complexity for DC-QAOA(NC) is lower, taking into consideration both the number of effective layers and the number of CNOT gates per layer required.

\begin{figure*}
	\includegraphics[width=\textwidth]{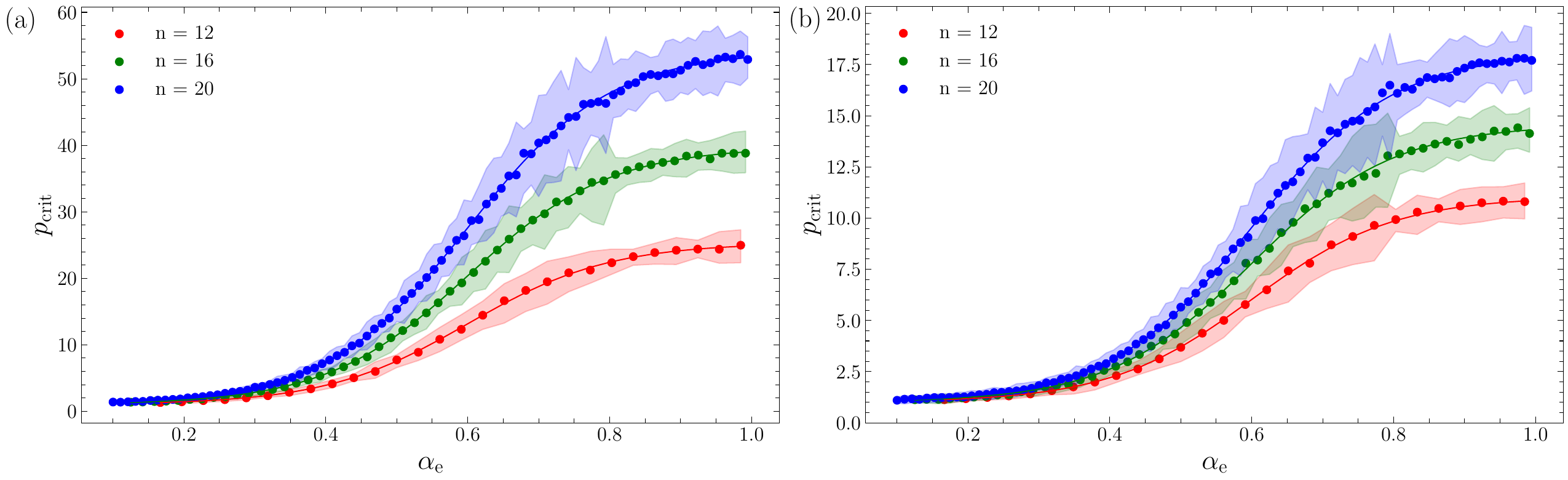}
	\caption{Average critical depth $p_\mathrm{crit}$ against edge density $\alpha_\mathrm{e}$ for QAOA in (a) and DC-QAOA(NC) in (b). Data points represent the average $p_\mathrm{crit}$ calculated according to Eq.~(\ref{eq:critical}) for 50 non-isomorphic graphs for all levels of edge density, and if the number of non-isomorphic graphs is less than 50, we make it up by choosing different initialization of both algorithm. Shaded area represent the standard deviation of $p_\mathrm{crit}$. Different color represent different system sizes.}
    \label{fig:new_logistic}
\end{figure*}

We set $3p^{\mathrm{QAOA}}_\mathrm{max}=p^{\mathrm{DC-QAOA(NC)}}_\mathrm{max}$ and obtain $N \simeq 16.54$. This suggests that DC-QAOA(NC) is more advantageous than QAOA when the number of vertices exceeds 16 (i.e. $N\ge17$). To show this more clearly, we execute DC-QAOA(NC) with $p$ layers, and a QAOA with intentionally set $3p$ layers (which is denoted as QAOA-3 in Fig.~\ref{fig:new_scaling}(b),(c) and (d) and should not be confused with regular QAOA with $p$ layers), to solve the MaxCut problem with different number of vertices. Fig.~\ref{fig:new_scaling}(b) shows the results on 12 vertices (cf. Fig.~\ref{fig:Fig9}(d)), and one can see that QAOA-3 clearly performs better than DC-QAOA(NC).  Fig.~\ref{fig:new_scaling}(c) shows the results on 16 vertices (cf. Fig.~\ref{fig:Fig9}(h)), in which situation the two algorithms are comparable. Given that $N=16$ is still less than 16.94, it is reasonable that QAOA-3 outperforms DC-QAOA(NC) slightly at this stage, although DC-QAOA(NC) rapidly approaches similar performance with increasing $p$. When $p>10$, the difference in $R$ between DC-QAOA(NC) and QAOA becomes negligible. We have also conducted a calculation comparing QAOA-3 and DC-QAOA(NC) for a problem with 20 vertices, with results shown in Fig.~\ref{fig:new_scaling} (d), although the calculation is more expensive. In this case, DC-QAOA(NC) is clearly superior than QAOA-3. Our results shown in Fig.~\ref{fig:new_scaling} confirm that DC-QAOA(NC) is only more advantageous than QAOA on graphs with more than 16 vertices. To our knowledge, this fact has not been noted in the literature and is the key result of this paper.


\begin{figure*}
\includegraphics[width=\textwidth]{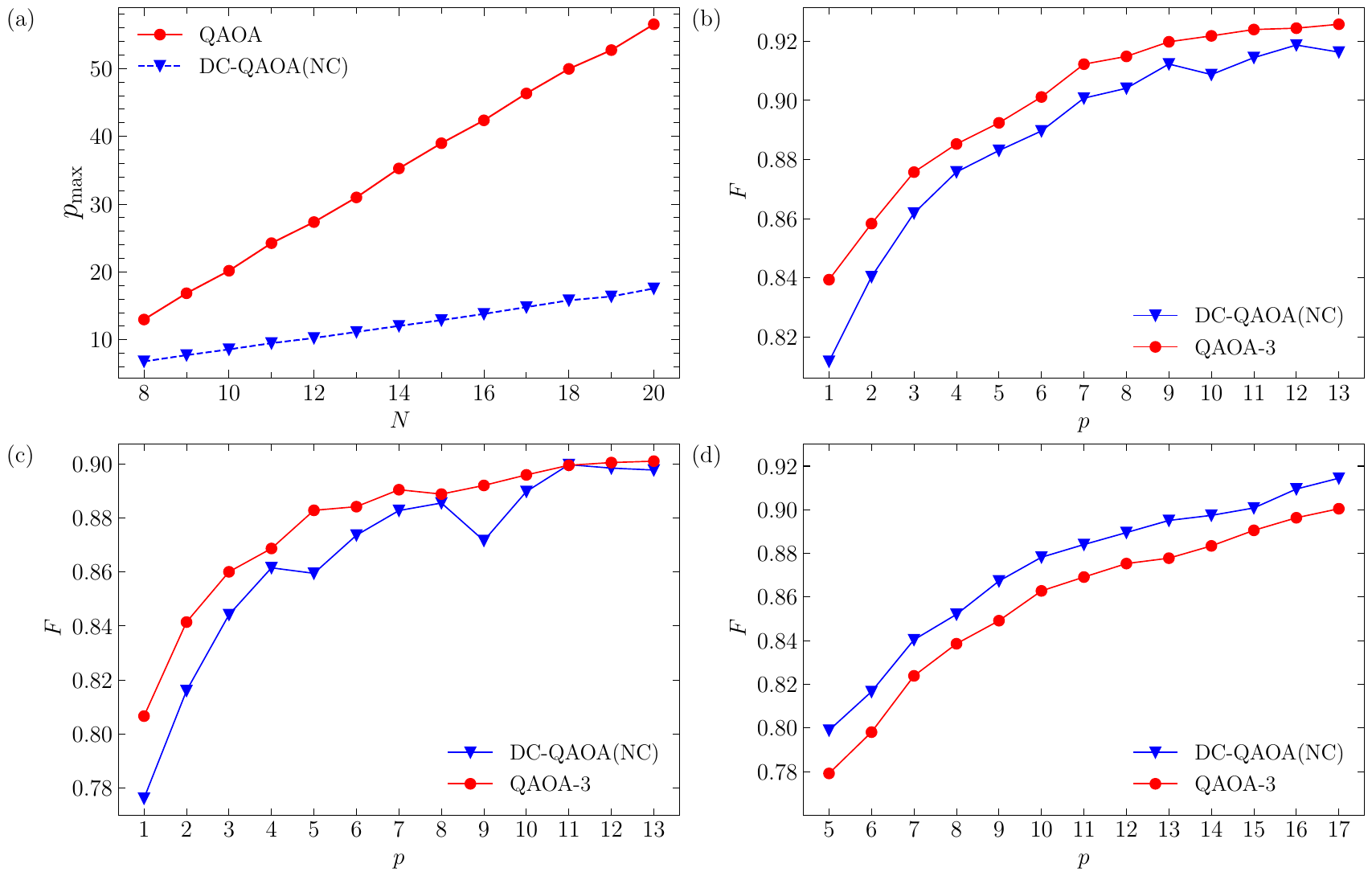}
    \caption{(a) The saturating depth $p_\mathrm{max}$ versus the number of qubits $N$ for DC-QAOA(NC) and QAOA. (b)-(d) The fidelity to ground state $F$ versus $p$ for DC-QAOA(NC) with increasing $p$ layers and for QAOA with $3p$ layers (QAOA-3) using the same number of CNOT gates to find the maximum cut in a graph with  (b) 12, (c) 16  and (d) 20 vertices.}
    \label{fig:new_scaling}
\end{figure*}
\subsection{Feasibility Test in NISQ devices}
In the preceding section, we performed a comparative analysis of the DC-QAOA framework, with particular focus on its DC-QAOA(NC) variant, against the standard QAOA in a noiseless computational setting. This analysis sought to determine whether DC-QAOA(NC), which incorporates the counterdiabatic phenomenon, could surpass the conventional QAOA algorithm in performance as the size of the quantum system scales. A key feature of DC-QAOA(NC) is its reduced reliance on two-qubit gates, such as CNOT gates, which are known to be both error-prone and resource-intensive. By minimizing the usage of these gates, DC-QAOA(NC) presents a promising avenue for achieving greater efficiency and scalability compared to the traditional QAOA framework.

Nevertheless, it is essential to account for the practical limitations imposed by current quantum hardware, particularly in the NISQ era. Quantum circuits with substantial depth are often impractical due to the rapid onset of decoherence and other noise-induced challenges. As the circuit depth increases, errors from gate operations and interactions with the environment accumulate, significantly compromising the fidelity of quantum computations. These constraints necessitate quantum algorithms that balance performance with hardware feasibility. In this context, the practical value of DC-QAOA(NC) lies in its potential to address these limitations by reducing circuit depth, thus improving resilience to noise and decoherence. Examining the trade-offs between algorithmic efficiency and noise tolerance in realistic, hardware-constrained scenarios is vital to fully assessing the advantages of DC-QAOA(NC) over standard QAOA in the NISQ regime.

A widely adopted approach for evaluating the effectiveness of QAOA and its various extensions, both numerically and experimentally, involves benchmarking these algorithms under shallow-circuit conditions, typically with one or a few layers. Such configurations not only reduce the computational and experimental burden but also enable demonstrations on systems extending beyond a small number of qubits. These benchmarks provide critical insights into algorithmic performance and scalability while remaining compatible with the limitations of current quantum devices. Here, we benchmarked one-layer DC-QAOA(NC) and three-layer QAOA, which utilize same number of CNOT gates, on the Sherrington-Kirkpatrick (SK) model  \cite{SKmodel}. We calculate the fidelity between optimal output state of ansatz and the actual ground state,  with 5 to 34 qubits in perfect environment using statevector simulator and with 5 to 17 qubits in noisy environment using quantum trajectory simulation with noise. 

 The SK model describes classical spin system with all-to-all couplings between the $n$ spins and its Hamiltonian is defined as follows:
\begin{equation}
	H_\mathrm{SK}=\frac{1}{2}\sum_{i,j}J_{ij} \sigma_z^{i}\sigma_z^{j},
	\label{eq:SK}
\end{equation}
where $J_{ij}$ is sampled from standard normal distribution. In the noisy environment, each qubit suffers from depolarizing noise when undergoing the CNOT operation and two-qubit depolarization channel can be described by:
\begin{equation}
	\rho'=(1-p)\rho+\frac{pI}{4}.
\end{equation}
In numerical simulation, such channel is applied after every two qubit gate, and by using quantum trajectory simulation method, we simulated this noisy quantum circuit up to 17 qubits. Here we set $p=0.66\%$, which is consistent with the CNOT error rate in the state-of-the-art devices published by IBM Quantum. The result of simulation of 100 randomly generated SK-model for every qubit $N$ can be seen from Fig.~\ref{fig:new_layer}. Both Fig.~\ref{fig:new_layer}(a) and Fig.~\ref{fig:new_layer}(b) show a clear exponential trend with slowly increasing variance. In the perfect environment, the two algorithms reach the same level of performance at around $N=16$ which confirms that one-layer DC-QAOA(NC) has a significantly better exponential scaling compared to three-layer QAOA. Whereas in the noisy environment, the two algorithms reach the same level of performance at around $N=11$. This clearly shows that DC-QAOA(NC) is more robust compared to QAOA. This robustness can be attributed to the fact that counterdiabatic driving terms reduce complexity of the circuit. In Fig.~\ref{fig:new_layer}(c) we can see that DC-QAOA(NC) shows exponential advantage over three-layer standard QAOA, and the advantage is more pronounced in the noisy environment.

\begin{figure*}
    \centering
    \includegraphics[width=\textwidth]{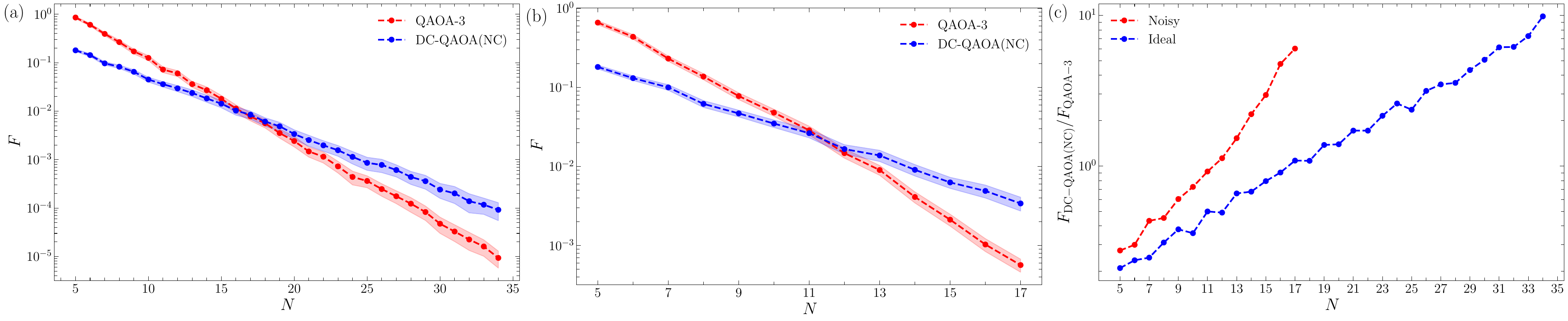}
    \caption{Fidelity $F$ as a function of the number of qubits $N$ in the SK model, defined in Eq.~(\ref{eq:SK}). The comparison is made between a one-layer DC-QAOA(NC) and a three-layer QAOA, both utilizing the same number of CNOT gates. (a) Shows results in an ideal environment without noise, while (b) incorporates depolarization error. One-layer DC-QAOA(NC) performs better than three-layer after $N$ exceeds around 16 in an ideal environment while same happens when $N$ exceeds 11 in the noisy environment. In (c), we shows the exponential improvement of one-layer DC-QAOA(NC) over three-layer QAOA. In ideal environment, the improvement scales as $10^{0.0586N}$ and in noisy environment, the improvement scales as $10^{0.1397N}$}
    \label{fig:new_layer}
\end{figure*}

\subsection{Effective Dimension and Excitation Suppression} \label{suppression}

\begin{figure}
    \centering
    \includegraphics[width=\linewidth]{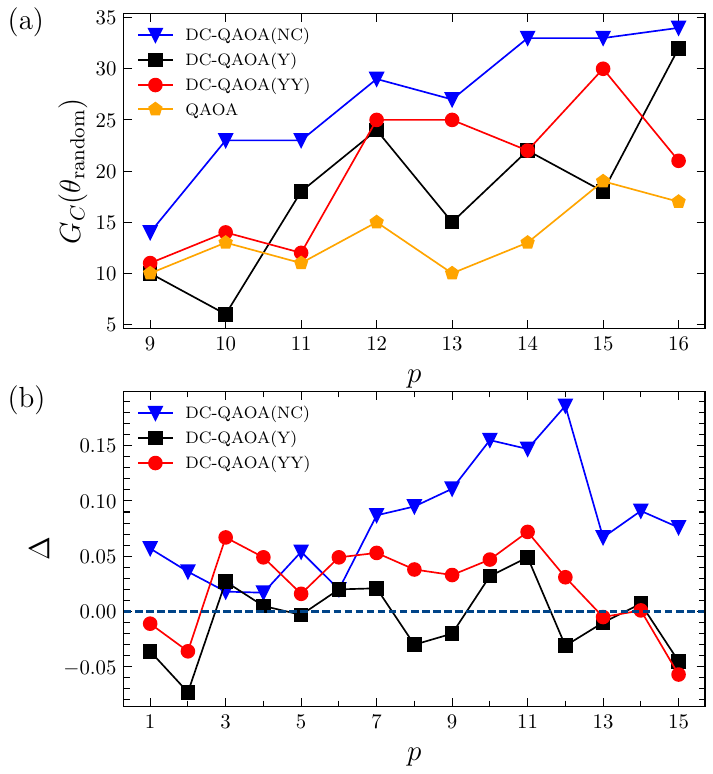}
    \caption{(a) $G_C(\theta_\mathrm{random})$ as functions of $p$ for DC-QAOA with three different kind of cd-terms, as well as QAOA. (b) The suppression ability $\Delta$ as functions of $p$ for the three types of DC-QAOA.}
    \label{fig:Fig5}
\end{figure}

In this section we delve deeper into the question why certain versions of the algorithm is more efficient than others. As we shall show in the following, the answer lies in the number of effective dimensions of the parameter space  \cite{haugCapacityQuantumGeometry2021a}. 
QAOA is, essentially,  a parameterized version of trotterized adiabatic quantum evolution: for  QAOA with $p$ layers, the corresponding adiabatic evolution is parameterized with $2p$ parameters $(\alpha_1,\alpha_2,...,\alpha_p;\beta_1,\beta_2,...,\beta_p)$. In other words, the parameter space in which one may search for the optimal realization of the algorithm has $2p$ dimensions. A larger $p$ implies a parameter space with greater dimensions, and the resulting QAOA algorithm is more powerful. On the other hand, DC-QAOA  increases the dimensions of the parameter space by adding cd-terms $U_\mathrm{cd}(\gamma)$ to every layer of the original QAOA, resulting in $3p$ parameters $(\alpha_1,\alpha_2,...,\alpha_p;\beta_1,\beta_2,...,\beta_p;\gamma_1,\gamma_2,...,\gamma_p)$. Despite the different physical intuitions behind QAOA and DC-QAOA, both algorithms attempt to increase the dimensions of the parameter space by increasing the complexity of quantum circuits. 

However, not all dimensions of the parameter space are independent. The ``effective'' dimensions, or independent dimensions, are important to parameterized quantum circuits  \cite{haugCapacityQuantumGeometry2021a}. 
The independent quantum dimensions, denoted as $D_C$ in  \cite{haugCapacityQuantumGeometry2021a}, refers to the number of independent parameters that the parameterized quantum circuit can express in the space of quantum states, i.e., in the total $2^N-1$ parameters (where the $-1$ is a result of normalization) for a generic $N$-qubit quantum state. Furthermore,  it can be calculated by determining $G_C(\theta)$ which counts the number of independent directions in the state space that can be accessed by an infinitesimal update of $\theta$, where $\theta$ includes all the trainable parameters in a parameterized quantum circuit. $G_C(\theta)$ can be evaluated by counting the number of non-zero eigenvalues of Quantum Fisher Information matrix $F(\theta)$ \cite{haugCapacityQuantumGeometry2021a}:
\begin{equation}
    G_C(\theta) = \sum_i^M I[\lambda^{(i)}(\theta)],
\end{equation}
where $I(x)=0$ for $x=0$ and $I(x)=1$ for $x\neq 0$. $F(\theta)$ is defined as:
\begin{equation}
    F_{ij}(\theta) = \mathrm{Re}(\braket{\partial_i\psi_\mathrm{o}|\partial_j\psi_\mathrm{o}}-\braket{\partial_i\psi_\mathrm{o}|\psi_\mathrm{o}}\braket{\psi_\mathrm{o}|\partial_j\psi_\mathrm{o}}),
\end{equation}
where $\ket{\psi_\mathrm{o}}$ is the output state of quantum circuit and $\ket{\partial_i\psi_o}$ is the change in output state upon an infinitesimal change in the $i$-th element of all the trainable parameters $\theta$.
It is worth to note that $G_C(\theta)$ can faithfully represent the number of independent parameters $D_C$ only when $\theta$ is randomly sampled from all possible parameters, under the condition that the parameterized quantum circuit is layer-wise with parameterized gates being rotations resulting from Pauli operators which enjoys 2$\pi$ periodicity in angles.

In Fig.~\ref{fig:Fig5}(a), we show $G_C(\theta_\mathrm{random})$ as functions of $p$ for QAOA and all three versions of DC-QAOA. We can see that for $p>10$, DC-QAOA always has more effective dimensions than QAOA, while the effective dimension for DC-QAOA(NC) is most pronounced. 
The cd-terms in DC-QAOA are derived from analog cd-driving, introduced to maintain the system in the ground state by preventing unwanted excitations. Extra dimensions provided by adding cd-terms can suppress the non-diagonal probabilities in the transition matrix, which excites the ground state of $H_\mathrm{i}$ to the excitation states of $H_\mathrm{f}$ and boost the performance of DC-QAOA. 
It is therefore interesting to examine the ability of different types of cd-terms to suppress unwanted excitations. To do this comparison, we first conduct the DC-QAOA as usual, resulting in the original approximate ratio $R_\mathrm{raw}$. Then, we conduct DC-QAOA again but with all cd-terms removed (all other parameters unchanged), arriving at a new approximate ratio $R_\mathrm{nocd}$. The suppression ability of this type of cd-terms, denoted as $\Delta$, is defined as the difference between $R_\mathrm{nocd}$ and the original approximate ratio $R_\mathrm{raw}$:
\begin{equation}
    \Delta = R_\mathrm{raw}-R_\mathrm{nocd}.
\end{equation}.

 In Fig.~\ref{fig:Fig5}(b) we show the comparison of the suppression ability $\Delta$ as functions of $p$ for the three types of DC-QAOA with 15 layers trained on a graph with 16 vertices. A positive $\Delta$ indicates that the cd-terms are effective in increasing the approximate ratio, and in turn, the efficiency of the algorithm. We see that for the case of DC-QAOA(NC), $\Delta$ is positive for the entire range of $p$, while for other methods  $\Delta$ fluctuates between positive and negative values. If we focus on the range $6<p<15$,  DC-QAOA(NC) has  $\Delta$ values higher than any other methods. This fact indicates the strong ability of DC-QAOA(NC) to suppress unwanted excitation and to add the effective dimensions, and thus its superior performance. On the other hand, the suppression ability for DC-QAOA(Y) and DC-QAOA(YY) are only positive for intermediate $p$ values, limiting its efficiency. This also explains why DC-QAOA(Y) and DC-QAOA(YY) do not exhibit considerable advantage over QAOA-3.
 
 
\subsection{Instance-Sequential Training (IST)} \label{training}

In previous sections, we have shown that DC-QAOA can be more advantageous than QAOA under certain conditions, because the former has accelerated critical depth scaling behavior and more effective dimensions to suppress non-diagonal transitions. In this section, we study the distribution of optimal parameters resulting from a judiciously trained DC-QAOA(NC). We shall show that these parameters are more concentrated in the parameter space, and based on this fact, we propose an IST method, which can substantially improve the efficiency in training.


For a $p$-layer QAOA defined on $N$ qubits, the optimal parameters $(\boldsymbol{\alpha}_N,\boldsymbol{\beta}_N)$ are said to be concentrated if:
\begin{equation}
    \begin{split}
      &\exists l>0: \forall \boldsymbol{\alpha}_N, \boldsymbol{\beta}_N, \exists \boldsymbol{\alpha}_{N+1}, \boldsymbol{\beta}_{N+1}: \\ 
      &d_N \equiv |\boldsymbol{\alpha}_N-\boldsymbol{\alpha}_{N+1}|^2+|\boldsymbol{\beta}_N-\boldsymbol{\beta}_{N+1}|^2 = \mathcal{O}\left(\frac{1}{N^l}\right)
    \end{split},
\end{equation}
where $(\boldsymbol{\alpha}_N,\boldsymbol{\beta}_N)$ may not be the unique optimal parameters,  $l \geq 2$ can guarantee that the optimal parameters will converge to a limit when $N \rightarrow \infty$  \cite{akshayParameterConcentrationQuantum2021}, and $d_N$ is the distance between optimal parameters found for two graphs whose number of vertices differ by one.  For DC-QAOA(NC), parameter concentration can be similarly defined on $(\boldsymbol{\alpha}_N,\boldsymbol{\beta}_N,\boldsymbol{\gamma}_N)$.

The dimension of the parameter space is typically high, and to visualize it we use the t-SNE algorithm  \cite{van2008visualizinga}. This algorithm is ubiquitously used for dimensionality reduction  while maintaining the local structure of the data, making clusters in the original high-dimensional space more apparent and easily identifiable in the reduced-dimensional space  \cite{van2008visualizinga}. 

We focus on the set of graphs as shown in Fig.~\ref{fig:Fig9}. For each graph, we optimize 10 sets of parameters of DC-QAOA(NC) and QAOA respectively from random initial parameters, in a 10-layer DC-QAOA(NC) and QAOA respectively. In Fig.~\ref{fig:Fig6}, all optimal parameters are embedded into two dimensions to visualize by t-SNE algorithm. For DC-QAOA(NC) in Fig.~\ref{fig:Fig6}(a), at least one set of optimal parameter of all graph size appears within the red box and roughly forms a curly shaped trajectory, whereas for QAOA in Fig.~\ref{fig:Fig6}(b), optimal parameters are more scattered, only creating small-scale clusters, if any, in different regions. 

\begin{figure}
    \centering
    \includegraphics[width=\linewidth]{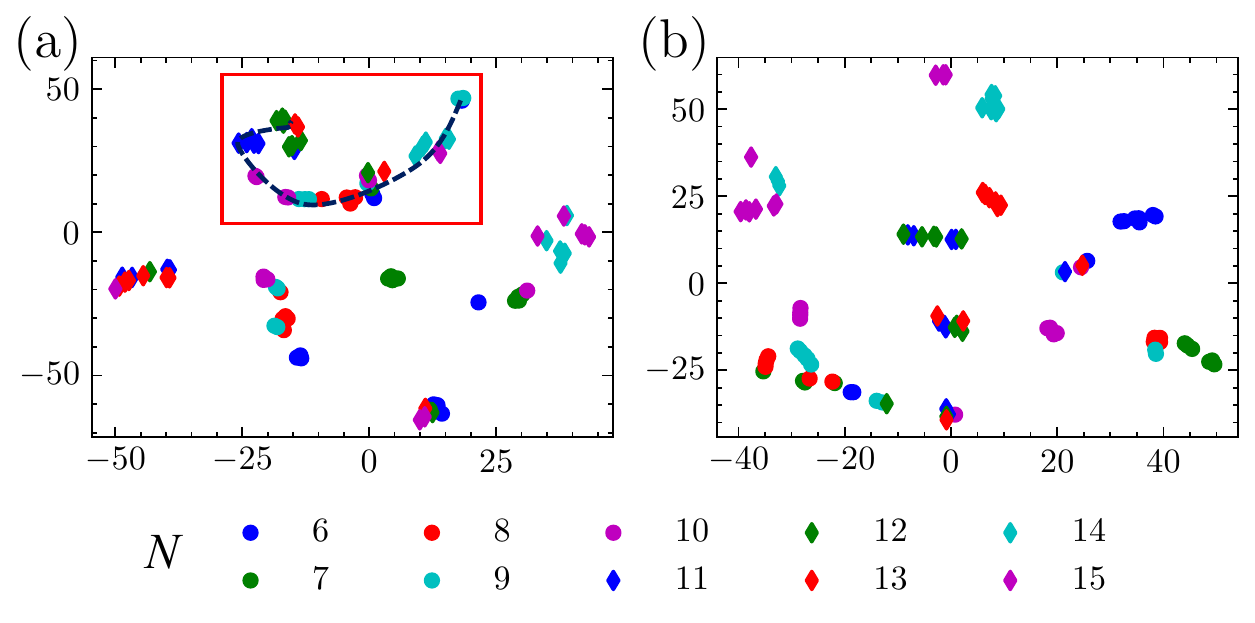}
    \caption{10 Optimal parameters for graphs with vertices ranging from 6 to 15 are embedded into a two dimensional manifold using t-SNE for (a) DC-QAOA(NC) and (b) QAOA. Graph with $n$ vertices is transformed into graph with $n-1$ vertices by randomly removing one vertex. In (a), optimal parameters of DC-QAOA(NC) for graphs with different number of vertices can all be found in the red box whereas in (b), optimal parameters of QAOA are more scattered.}
    \label{fig:Fig6}
\end{figure}

To examine parameter concentration in a more quantitative fashion, the distance $d_N$ between $(\boldsymbol{\alpha}_N,\boldsymbol{\beta}_N)$ and $(\boldsymbol{\alpha}_{N+1},\boldsymbol{\beta}_{N+1})$ for QAOA and the distance between $(\boldsymbol{\alpha}_N,\boldsymbol{\beta}_N,\boldsymbol{\gamma}_N)$ and $(\boldsymbol{\alpha}_{N+1},\boldsymbol{\beta}_{N+1},\boldsymbol{\gamma}_{N+1})$ for DC-QAOA(NC) are plotted against  the logarithm of number of qubits $N$  in Fig.~\ref{fig:Fig7}. Distance is calculated between two closest ones from 100 sets of optimal parameters between graphs with $N$ qubits and $N+1$ qubits. For DC-QAOA(NC), the logarithm of the distance shows a linear trend with number of qubits, indicating that there is a consistent trend of parameter concentration in DC-QAOA(NC) with $l \simeq 7 $, implying that optimal parameters of DC-QAOA will converge to a limit when $N$ is large enough.

To fully leverage the advantage afforded by such phenomenon of parameter concentration  in DC-QAOA(NC), we propose the IST method, that can make the training process of DC-QAOA(NC) more efficient in using quantum resources. Under the IST method, for a graph with $N$ vertices, we randomly remove $N-k$ vertices (and the edges associated with them) while keeping the graph connected, i.e. the graph should not be separated to two. Firstly, we train DC-QAOA(NC) on the remaining $k$ vertices to get optimal parameters $(\boldsymbol{\alpha}_k,\boldsymbol{\beta}_k,\boldsymbol{\gamma}_k)$. Then, one vertex is added back to the graph (and the edges associated with it)  and DC-QAOA(NC) is trained again using $(\boldsymbol{\alpha}_k,\boldsymbol{\beta}_k,\boldsymbol{\gamma}_k)$ as the starting point to obtain $(\boldsymbol{\alpha}_{k+1},\boldsymbol{\beta}_{k+1},\boldsymbol{\gamma}_{k+1})$. We then add one more vertex back, and the above steps are repeated until $(\boldsymbol{\alpha}_N,\boldsymbol{\beta}_N,\boldsymbol{\gamma}_N)$ is obtained. The detailed procedure of the IST method is outlined in Algorithm \ref{alg:IST}, and an illustrative flow chart is presented in Fig.~\ref{fig:Flow_chart}.

\begin{algorithm}
\caption{Instance-Sequential Training (IST) Method}
\label{alg:IST}
\begin{algorithmic}[1]
\REQUIRE Graph $ G $ with $ N $ vertices
\STATE Initialize an empty set of parameters $ \Theta $
\STATE Randomly remove $ N-k $ vertices to form a reduced graph $ G_k $, ensuring connectivity
\STATE Train the DC-QAOA circuit on $ G_k $ to find optimal parameters $ \Theta_k $
\STATE Set $ \Theta = \Theta_k $
\FOR{$ i = k+1 $ to $ N $}
    \STATE Add one vertex back to $ G $ to form $ G_i $
    \STATE Train DC-QAOA on $ G_i $ using $ \Theta $ as the initial parameters
    \STATE Update $ \Theta $ with the new optimal parameters
\ENDFOR
\RETURN Final parameters $ \Theta $
\end{algorithmic}
\end{algorithm}

\begin{figure*}
	\includegraphics[width=\textwidth]{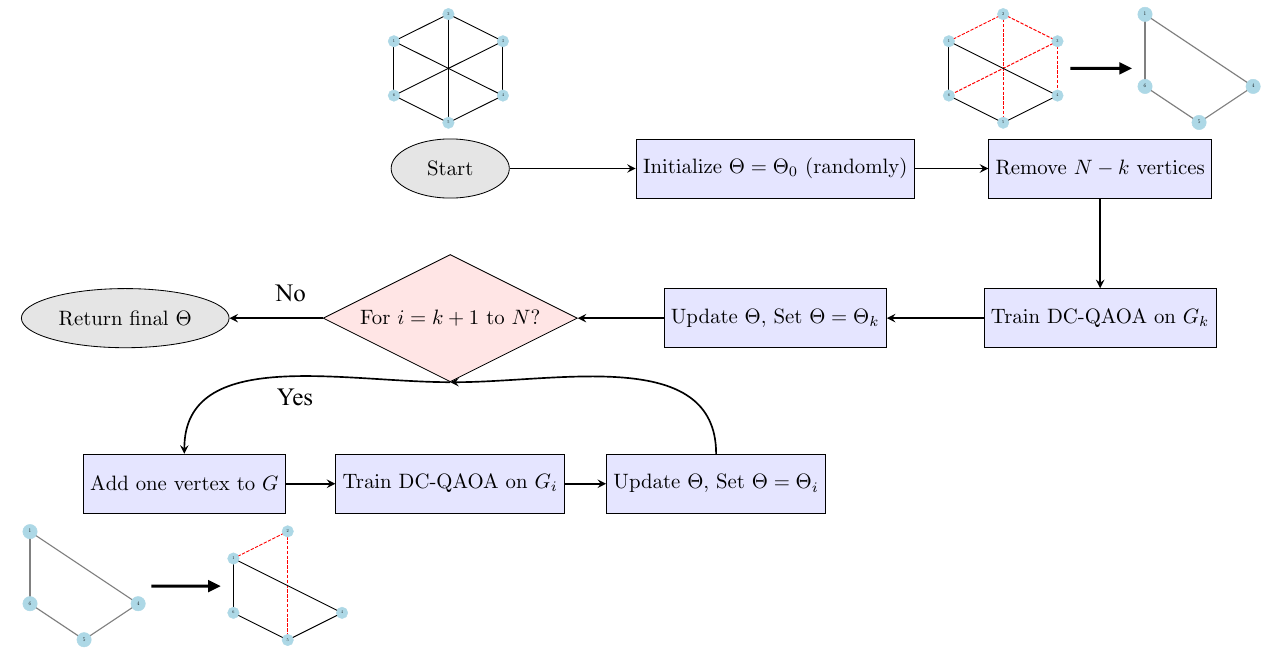}
	\caption{An illustrative flow chart of IST algorithm. In the illustration, we use a six-node graph as an example. The original graph is shown above "start", after initialization of $\Theta$ with $\Theta_0$, two nodes are removed from the graph, together with all the edges connected to these nodes, shown by red dashed line. In the loop process, one node is added back to the graph to conduct another round of training, together with its edges with all the existing nodes, also shown by red dashed line.}
	\label{fig:Flow_chart}
\end{figure*}

On the other hand, traditional training method needs to construct a circuit on $N$ qubits based on the connected edges in the graph. The number of CNOT gates in $U_\mathrm{f}(\beta)$ and $U_\mathrm{cd}(\gamma)$ are the same as the number of edges in the graph, which, for large problems necessarily complicating the circuit, makes it more prone to decoherence. In order to quantify and compare the quantum resources used in the two training methods, we use the product of number of qubits $N$, number of CNOT gates (equivalent to $6M$ for DC-QAOA(NC) and $2M$ for QAOA respectively, where $M$ is the number of edges in the graph $\mathcal{G}$) and number of training iterations $c$ as an indicator. The proposed indicator for comparing quantum resources captures three critical dimensions of resource usage in quantum training:
\begin{itemize}
    \item Number of qubits $N$ represents the system size, directly influencing the computational complexity and the memory resources required to simulate or run the quantum system.
    \item Number of CNOT gates ($\propto M$) reflects the entangling operations, which are among the most resource-intensive quantum operations due to their role in creating quantum correlations.
    \item Training iterations $c$ represent the temporal aspect of resource consumption, as more iterations correspond to greater computation time.
\end{itemize}
Together, this indicator effectively quantifies the overall computational burden by incorporating both spatial (qubits and CNOT gates) and temporal (iterations) aspects, giving a holistic view of the quantum resources consumed during the training process.
The quantum resources used in the two training methods are summarized in Table~\ref{tab:table2}. It is clear that $\sum_{i=k}^N \frac{c}{N-k+1}\times6M_i\times i < \sum_{i=k}^N \frac{c}{N-k+1}\times 6M\times i=6cM\frac{N+k}{2}$, which means that IST uses at most half of quantum resources used by traditional method when $k$ is sufficiently small compared to $N$ as in the case of large-scale quantum optimization where $N$ is large.

\begin{table}[h!]
  \begin{center}
    \label{tab:table2}
    \begin{tabular}{|l|r|} 
      \hline
      Method & Quantum Resources \\
      \hline
        IST & $\sum_{i=k}^N \frac{c}{N-k+1}\times6M_i\times i$  \\
      \hline
        Traditional & $6cMN$ \\
    \hline
    \end{tabular}
  \end{center}
  \caption{The quantum resources used by training DC-QAOA(NC) via IST and traditional training methods respectively. $M_i$ is the number of edges when there are $i$ remaining vertices in the graph.}
  \label{tab:table2}
\end{table}

To demonstrate the effectiveness of the IST method, we compare it to the traditional training method while fixing the number of total training iterations the same. In this comparison, the IST method uses much fewer quantum resources because it does not have to access all $N$ qubits in the quantum processing unit in every iteration. We benchmark the advantage of performance by the proportion of approximate ratio $R_\mathrm{IST}$ of the IST method relative to the one for traditional method $R_\mathrm{traditional}$. In Fig.~\ref{fig:Fig8}, the performance of DC-QAOA(NC) trained using IST method is nearly the same with the one trained using traditional method using much less quantum resources. Sometimes, for $N=9$ and 13 DC-QAOA(NC) trained using IST works even better than the one trained using traditional method. However, for QAOA, the IST method fails due to the lack of parameter concentration. We conclude that when the number of training iterations are 
held the same for both methods, IST method, on average, takes less time to run a quantum device and uses fewer qubits to construct the circuits. This is a favorable characteristic especially when the access to the device is limited by certain quota, e.g. cloud-based ones. 

\begin{figure}[t]
    \centering
    \includegraphics[width=\linewidth]{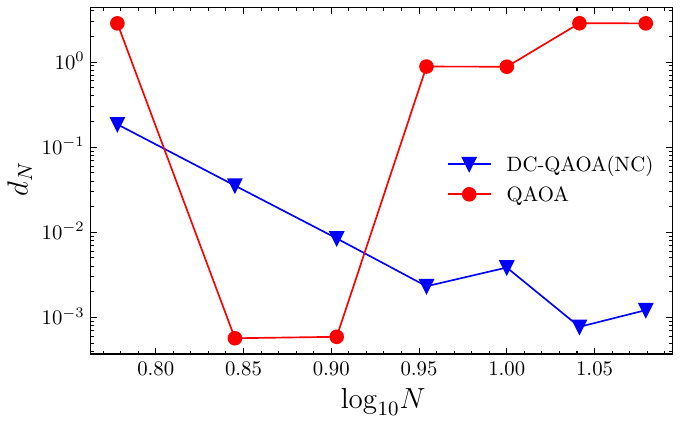}
    \caption{Logarithm of distance between optimal parameters versus qubits (vertices) for different graphs shown in Fig.~\ref{fig:Fig9}. Distance is calculated based on the closest optimal parameters for graphs with $N$ vertices and $N+1$ vertices among 100 random initializations. For DC-QAOA(NC), logarithm of distance shows a linear trend indicating that the optimal parameters are much more concentrated than in QAOA.}
    \label{fig:Fig7}
\end{figure}

\begin{figure}[t]
    \centering
    \includegraphics[width=\linewidth]{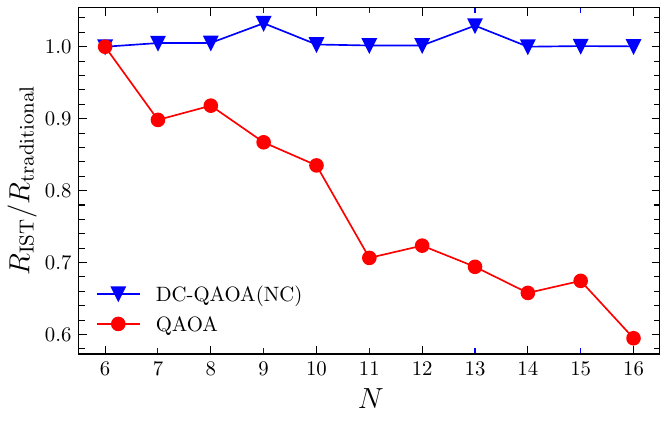}
    \caption{A comparative analysis between IST and traditional training methods in DC-QAOA(NC) and QAOA. IST not only matches but sometimes surpasses traditional methods in achieving optimal expectation values, demonstrating superior efficiency with significantly fewer quantum resources because parameter concentration.}
    \label{fig:Fig8}
\end{figure}

\section{CONCLUSION and Outlook} \label{conclude}

In this work, we have used the MaxCut problem as an example to thoroughly examine the comparison between DC-QAOA and QAOA. Such a comparison has been previously done with the number of layers held equal and with a conclusion that DC-QAOA is superior to QAOA for all sizes of the problem. However, we demonstrate that this statement must be carefully refined according to the actual circuit complexity afforded by different algorithms, which includes both the number of layers and number of CNOT gates in each layer. The cd-terms introduced in DC-QAOA reduces the critical depth of the algorithm but also introduces additional CNOT gates, so the overall circuit complexity is not necessarily lower than QAOA without cd-terms. We have found that for MaxCut problem with $N<16$, QAOA is more advantageous than DC-QAOA(NC), while for larger problems with $N>16$, DC-QAOA(NC) is superior. This refinement of the comparison between the two algorithms has not been previously noted in the literature. Although both YY and Y types of cd-terms enhance the performance of QAOA, the magnitude of improvement is not as significant as that achieved by DC-QAOA(NC).

To further validate the practicality of DC-QAOA, we benchmark one-layer DC-QAOA(NC) against three-layer QAOA, ensuring they use the same number of CNOT gates. In noiseless simulations, we find that DC-QAOA(NC) starts outperforming QAOA when the problem size exceeds $N=16$. More importantly, in a noisy environment simulated using a depolarization model with realistic error rates in NISQ devices, DC-QAOA(NC) achieves an advantage at an even smaller problem size ($N \approx 11$). This result highlights the robustness of DC-QAOA(NC) against noise, making it a promising approach for near-term quantum hardware.

Furthermore, we calculated the effective dimensions of the algorithms and have shown that the superior performance of DC-QAOA(NC) in larger problems is rooted in its strong ability to suppress unwanted excitation and to add the effective dimensions. We have also noted that the optimal parameters found by DC-QAOA(NC) are more concentrated than those from QAOA.
Based on this finding,  we introduced an IST training method, which achieves similar (and sometimes even better) results as compared to traditional training methods, while consuming much fewer quantum resources. Our results should be valuable to recent efforts in utilizing VQA to solve combinatorial optimization problems for large-scale computational tasks, thereby advancing the study of quantum optimization.

While our study focuses on comparing DC-QAOA with standard QAOA, it is important to acknowledge that numerous other QAOA variants have been explored, each designed to enhance specific aspects of the algorithm. Broadly, these variants can be classified into three main categories. The first category aims to increase the expressivity and flexibility of the ansatz by introducing more variational parameters, as seen in Expressive QAOA (XQAOA) \cite{XQAOA} and Multi-Angle QAOA (MA-QAOA) \cite{multi1,multi2}. Since standard QAOA can be viewed as a subset of these approaches, they ideally perform at least as well as standard QAOA in the worst case. The same motivation can be extended to DC-QAOA, as adding more parameters could potentially enhance its optimization capability. However, this comes at the cost of increased sampling overhead, as estimating gradients for a higher number of parameters demands more measurements, exposing the algorithm to greater shot noise. This issue is particularly significant in the NISQ era, where hardware constraints limit measurement precision. Furthermore, overparameterization alters the loss landscape of the cost function, potentially leading to trainability issues such as barren plateaus. Future studies should carefully examine whether adding more parameters to DC-QAOA (particularly DC-QAOA(NC)) yields a genuine performance improvement or simply exacerbates these challenges.

The second category of QAOA variants focuses on redesigning the mixing operator instead of relying on the problem Hamiltonian $H_f$, as in the case of Grover Mixer QAOA (GM-QAOA) \cite{grover1,grover2}. This poses a fundamental challenge for DC-QAOA because it originates from counterdiabatic driving, which depends on constructing an appropriate gauge potential to suppress excitations. When a different mixing Hamiltonian is introduced, designing a corresponding counterdiabatic term becomes nontrivial, potentially requiring new approaches to ensure diabatic transitions are effectively minimized. Exploring how counterdiabatic strategies can be adapted to non-standard mixing operators presents an interesting direction for future research.

A third category of variants seeks to reduce the quantum computational burden by leveraging classical preprocessing techniques that exploit problem structure, as seen in Recursive QAOA (RQAOA) \cite{RQAOA} and ST-Spanning Tree QAOA (ST-QAOA) \cite{STQAOA}. These methods apply classically solvable criteria to simplify the problem before executing the quantum algorithm. Unlike the previous categories, these techniques can be seamlessly integrated into DC-QAOA without fundamentally altering its counterdiabatic framework. Future work could explore how such classical-quantum hybrid strategies can further enhance the practical implementation of DC-QAOA, particularly in scaling to larger problem instances while mitigating quantum resource constraints.

Overall, our findings suggest that DC-QAOA provides a meaningful alternative to standard QAOA, particularly when considering circuit complexity in the NISQ era. However, further research is needed to explore its applicability across broader problem classes, its potential integration with other QAOA variants, and its behavior under realistic quantum hardware noise. A deeper understanding of its trainability, parameter concentration properties, and adaptation to alternative ansatz will be essential in determining its viability for large-scale quantum optimization in the near future.

\section*{ACKNOWLEDGEMENT}

This work is supported by the National Natural Science Foundation of China (Grant Nos.~11874312 and 12474489), the Research Grants Council of Hong Kong (CityU 11304920), Shenzhen Fundamental Research Program (Grant No. JCYJ20240813153139050), the Guangdong Provincial Quantum Science Strategic Initiative (Grant No. GDZX2203001, GDZX2403001), and the Innovation Program for Quantum Science and Technology (Grant No. 2021ZD0302300).

\appendix

\section{Proof of equivalence between Max-k-SAT and MaxCut} \label{appendix_a}

Max-k-SAT and MaxCut are two fundamental problems in combinatorial optimization and computational complexity. Both are known to be NP-hard and widely studied in theoretical computer science. In this appendix, we present a detailed polynomial-time reduction between Max-k-SAT and MaxCut in both directions, thereby demonstrating their computational equivalence. While such reductions are standard (see, e.g., \cite{reduction1,reduction2,reduction3}), we include them here for completeness and pedagogical clarity, particularly to motivate why the logistic saturation conjecture, initially formulated for Max-2-SAT, can be extended to MaxCut.

\subsubsection{Max-k-SAT}

Let $ \varphi $ be a Boolean formula in \textbf{Conjunctive Normal Form (CNF)}, which consists of $ m $ clauses, each containing at most $ k $ literals. A \textbf{literal} is a variable $ x_i $ or its negation $ \neg x_i $, where $ x_i \in \{x_1, x_2, \dots, x_n\} $ is a Boolean variable.

We define the \textbf{Max-k-SAT} problem as follows:

\begin{enumerate}
    \item \textbf{Input}: 
    \begin{itemize}
    	\item A set of Boolean variables $ \{x_1, x_2, \dots, x_n\} $.
        \item A Boolean formula $ \varphi = C_1 \land C_2 \land \cdots \land C_m $, where each clause $ C_i $ is a disjunction of at most $ k $ literals, i.e., 
        \[
        C_i = \ell_{i1} \vee \ell_{i2} \vee \cdots \vee \ell_{ik}
        \]
        where each $ \ell_{ij} $ is a literal which is either a variable or its negation (for example, $ x $ or $ \neg x $). In this way, $ \ell_{ij} \in \{x_1, x_2, \dots, x_n, \neg x_1, \neg x_2, \dots, \neg x_n\} $.
    \end{itemize}
    
    \item \textbf{Goal}:  
    Find a truth assignment $ \mathbf{A} = (a_1, a_2, \dots, a_n) $ where $ a_i \in \{\text{True}, \text{False}\} $ for every Boolean variable, such that the \textbf{number of satisfied clauses} is maximized.

    \item \textbf{Satisfaction of a Clause}:  
    A clause $ C_i = \ell_{i1} \vee \ell_{i2} \vee \cdots \vee \ell_{ik} $ is said to be \textbf{satisfied} if at least one of its literals evaluates to true under the truth assignment $ \mathbf{A} $. Formally, for a clause $ C_i $, it is satisfied if there exists at least one $ j \in \{1, 2, \dots, k\} $ such that:
    \[
    \text{Evaluate}(\ell_{ij}, \mathbf{A}) = \text{True}
    \]
    where $ \text{Evaluate}(\ell_{ij}, \mathbf{A}) = \text{True} $ if $ \ell_{ij} $ evaluates to true under the assignment $ \mathbf{A} $, and $ \text{False} $ otherwise.

    \item \textbf{Objective}:  
    Maximize the number of satisfied clauses, i.e., the goal is to find the assignment $ \mathbf{A} $ such that the number of satisfied clauses $ \text{Satisfy}(\mathbf{A}) $, defined as:
    \[
    \text{Satisfy}(\mathbf{A}) = \left| \left\{ i \mid C_i \text{ is satisfied by } \mathbf{A} \right\} \right|
    \]
    is maximized.

    \item \textbf{Maximization}:  
    The problem is to find the truth assignment $ \mathbf{A} $ that maximizes $ \text{Satisfy}(\mathbf{A}) $. Mathematically:
    \[
    \mathbf{A^*} = \arg \max_{\mathbf{A}} \text{Satisfy}(\mathbf{A})
    \]
    where $ \mathbf{A^*} $ is the truth assignment that maximizes the number of satisfied clauses.
\end{enumerate}

In summary, Max-k-SAT is a \textit{maximization problem} where the aim is to optimize the truth assignment to variables such that the total number of clauses that evaluate to true is as large as possible. One example of Max-3-SAT is shown in TABLE~\ref{tab:max3sat} where for each of the \(2^3 = 8\) possible truth assignments, we indicate which clauses are satisfied and count the total.
The goal in Max-3-SAT is to find the assignment that satisfies the most clauses. In this example, 5 assignments satisfy all three clauses, achieving the maximum value.

\begin{table}[h!]
\centering

\begin{tabular}{|c|c|c|c|c|c|c|}
\hline
\(x_1\) & \(x_2\) & \(x_3\) & Clause 1 & Clause 2 & Clause 3 & \# Satisfied \\
\hline
T & T & T & Satisfied & Satisfied & Satisfied & \textbf{3} \\
T & T & F & Satisfied & Satisfied & Satisfied & \textbf{3} \\
T & F & T & Satisfied & Not Satisfied & Satisfied & 2 \\
T & F & F & Satisfied & Satisfied & Satisfied & \textbf{3} \\
F & T & T & Satisfied & Satisfied & Satisfied & \textbf{3} \\
F & T & F & Satisfied & Satisfied & Not Satisfied & 2 \\
F & F & T & Satisfied & Satisfied & Satisfied & \textbf{3} \\
F & F & F & Not Satisfied & Satisfied & Satisfied & 2 \\
\hline
\end{tabular}

\caption{Truth table for a Max-3-SAT instance with clauses \(C_1 = x_1 \lor x_2 \lor x_3\), \(C_2 = \lnot x_1 \lor x_2 \lor \lnot x_3\), and \(C_3 = x_1 \lor \lnot x_2 \lor x_3\). The final column shows how many clauses are satisfied by each assignment. The maximum is highlighted.}
\label{tab:max3sat}
\end{table}

\subsubsection{Max-Cut}

The \textbf{Max-Cut} problem is a well-known optimization problem in graph theory. Given a graph, the goal is to partition the vertices of the graph into two disjoint sets such that the number of the edges crossing the partition is maximized.

We define the \textbf{Max-Cut} problem as follows:

\begin{enumerate}
    \item \textbf{Input}: 
    \begin{itemize}
        \item A graph \( G = (V, E) \), where:
        \begin{itemize}
            \item \( V = \{v_1, v_2, \dots, v_n\} \) is the set of vertices, with \( n \) vertices.
            \item \( E = \{e_1, e_2, \dots, e_m\} \) is the set of edges, where each edge \( e_i = (v_a, v_b) \) connects two vertices \( v_a, v_b \in V \).
        \end{itemize}
        \item A partition of the vertices into two sets \( V_1 \subseteq V \) and \( V_2 = V \setminus V_1 \), where \( V_2 \) is the complement of \( V_1 \), meaning \( V_1 \cup V_2 = V \) and \( V_1 \cap V_2 = \emptyset \).
    \end{itemize}

    \item \textbf{Goal}:  
    Find a partition \( (V_1, V_2) \) of the vertex set \( V \) such that the number of edges between \( V_1 \) and \( V_2 \) is maximized. That is, we want to maximize:
    \[
    \text{Cut}(V_1, V_2) = \left| \left\{ e \in E \mid e \text{ connects a vertex in } V_1 \text{ to a vertex in } V_2 \right\} \right|
    \]
    where the cut size is the number of edges that have one endpoint in \( V_1 \) and the other in \( V_2 \).

    \item \textbf{Objective}:  
    Maximize the value of the cut, which is the number of edges crossing the partition \( (V_1, V_2) \):
    \[
    \text{Max-Cut} = \max_{V_1 \subseteq V} \left| \text{Cut}(V_1,V_2) \right|
    \]
    where \( V_1 \subseteq V \) is the set of vertices in one partition, and \( V_2 \) is the complement of \( V_1 \) in the partition.

    \item \textbf{Maximization}:  
    The problem is to find the partition \( (V_1^*, V_2^*) \) that maximizes the cut size. Mathematically:
    \[
    (V_1^*, V_2^*) = \arg \max_{(V_1, V_2)} \text{Cut}(V_1, V_2)
    \]
    where \( (V_1^*, V_2^*) \) is the partition that maximizes the number of edges crossing the cut.
\end{enumerate}

\subsection{Reduction from Max-k-SAT to MaxCut}

\begin{theorem}
Any instance of Max-k-SAT can be transformed into an equivalent instance of MaxCut in polynomial time, preserving the optimization structure and solution correspondence.
\end{theorem}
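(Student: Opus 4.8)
The plan is to give an explicit gadget-based reduction that maps a Max-k-SAT instance $\varphi$ with variables $\{x_1,\dots,x_n\}$ and clauses $\{C_1,\dots,C_m\}$ to a graph $G=(V,E)$ for which the maximum cut equals an affine function $A + B\cdot s^{*}$ of the maximum number $s^{*}$ of simultaneously satisfiable clauses, where $A$ and $B$ are fixed nonnegative constants determined entirely by the construction. Since $A$ and $B$ do not depend on the assignment, maximizing $\mathrm{Cut}(V_1,V_2)$ is then equivalent to maximizing the number of satisfied clauses, and an optimal partition will be shown to encode an optimal truth assignment, which gives the claimed solution correspondence.

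First I would reduce to clauses of bounded width: a clause of $k$ literals can be split into $O(k)$ clauses of width at most $3$ by introducing auxiliary variables (the standard clause-splitting transformation), so it suffices to exhibit gadgets for width-$2$ and width-$3$ clauses. This preprocessing is polynomial and shifts the optimal value by a known constant. Next I would encode truth values: I introduce a distinguished reference vertex $v_0$ and, for each variable $x_i$, two literal vertices $u_i$ (for $x_i$) and $\bar u_i$ (for $\neg x_i$); the side of the cut on which a literal vertex lies \emph{relative to} $v_0$ is interpreted as its truth value. To force a consistent assignment — that $u_i$ and $\bar u_i$ always receive opposite values — I attach a heavily weighted consistency edge between $u_i$ and $\bar u_i$, realized in the unweighted setting by a bundle of parallel edges (or a short path gadget) whose multiplicity is chosen large enough that every optimal cut must separate the pair.

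For each clause I then attach a clause gadget whose maximal internal cut contribution is attained precisely when at least one of its literal vertices sits on the ``true'' side; a triangle-type gadget suffices for width-$2$ clauses and a slightly larger gadget for width-$3$ clauses. In each case I would tabulate the cut contribution over all placements of the participating literal vertices and verify that every satisfying configuration beats the unique unsatisfied one by exactly $B$. The argument is then completed in three routine steps: (i) show that in any optimal cut all consistency edges are cut, so the partition encodes a valid assignment and lets us read off the variable values; (ii) sum the per-clause contributions to obtain $\mathrm{Cut}=A+B\cdot(\#\text{satisfied})$; and (iii) count the added vertices and edges to confirm the whole construction has size polynomial in $n$ and $m$.

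The main obstacle is the design and verification of the clause gadgets: I must ensure that the cut profile tracks satisfaction \emph{uniformly} across all literal placements, so that the penalty incurred by an unsatisfied clause is independent of which particular literals are false, and that the weighted-to-unweighted conversion via edge multiplicities introduces no spurious optimal cuts that violate consistency. The delicate quantitative point is choosing the consistency-edge multiplicities large enough to dominate every clause gadget, while keeping them polynomially bounded so that the reduction remains polynomial-time.
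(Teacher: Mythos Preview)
Your plan and the paper's proof share the same architecture: heavy-weight consistency edges between each pair of complementary literal vertices force any optimal cut to encode a valid truth assignment, and a per-clause gadget is designed so that its optimal local cut contribution is larger by a fixed amount precisely when the clause is satisfied, yielding an affine relation between the maximum cut and the maximum number of satisfiable clauses. The concrete realizations differ, however. The paper works directly in weighted MaxCut and handles arbitrary clause width with a single uniform gadget: for each clause $C_j$ it introduces two auxiliary vertices $u_j,w_j$, joins each of them to every literal vertex in $C_j$ and to one another, sets the consistency weight to $W=3m+1$, and asserts a contribution of $3$ versus $2$ per clause to obtain $\mathrm{OPT}_{\mathrm{Cut}}=3\cdot\mathrm{OPT}_{\mathrm{SAT}}+nW$. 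You instead first normalize to width at most~$3$ via the standard clause-splitting transformation (correctly observing that this shifts the optimum by a known constant), introduce an explicit reference vertex $v_0$ to anchor the ``true'' side, and then use the classical triangle-style gadgets for width-$2$ and width-$3$ clauses, also sketching the weighted-to-unweighted conversion via edge multiplicities. Your route is slightly longer but tracks the textbook Karp-style reductions more closely and makes the per-clause case analysis cleaner; the paper's single-gadget route is more compact but its ``$3$ versus $2$'' accounting is stated quite tersely and would need a fuller case analysis to verify for general $k$. Either way the essential mechanism---consistency weight dominating clause contributions, affine correspondence, polynomial size---is the same.
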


\begin{proof}
\textbf{Construction:}
Given a Max-k-SAT instance with variables $ x_1, \dots, x_n $ and clauses $ C_1, \dots, C_m $, construct a weighted graph $ G(V, E) $ as follows:

\paragraph{Step 1: Variable Consistency Gadgets}
For each variable $ x_i $:
\begin{itemize}
    \item Create two vertices $ v_i $ (representing $ x_i $) and $ \bar{v}_i $ (representing $ \neg x_i $).
    \item Connect $ v_i $ and $ \bar{v}_i $ with an edge of weight $ W = 3m + 1 $. This enforces consistency: in any optimal cut, $ v_i $ and $ \bar{v}_i $ must lie in opposite partitions.
\end{itemize}

\paragraph{Step 2: Clause Gadgets}
For each clause $ C_j = (\ell_1 \lor \ell_2 \lor \dots \lor \ell_k) $:
\begin{itemize}
    \item Introduce two auxiliary vertices $ u_j $ and $ w_j $.
    \item Connect $ u_j $ to each literal vertex in $ C_j $ (i.e., $ v_i $ if $ \ell = x_i $, or $ \bar{v}_i $ if $ \ell = \neg x_i $) with edges of weight 1.
    \item Connect $ w_j $ to each literal vertex in $ C_j $ with edges of weight 1.
    \item Connect $ u_j $ and $ w_j $ with an edge of weight 1.
\end{itemize}

\paragraph{Step 3: Interpretation of Weights}
\begin{itemize}
    \item The large weight $ W = 3m + 1 $ ensures that violating consistency (placing $ v_i $ and $ \bar{v}_i $ in the same partition) incurs a penalty exceeding any possible gain from clauses. Thus, optimal cuts correspond to valid truth assignments.
    \item For clauses, the gadget ensures:
    \begin{itemize}
        \item If at least one literal in $ C_j $ is true (i.e., its vertex is in the opposite partition from $ u_j $), the edges $ (u_j, \text{literal}) $ and $ (w_j, \text{literal}) $ contribute to the cut. This yields a total contribution of $ 3 $ for the gadget (e.g., if $ u_j $ is in $ V_1 $, edges from $ u_j $ to true literals in $ V_2 $ and the edge $ (u_j, w_j) $ contribute 3).
        \item If all literals in $ C_j $ are false, the gadget contributes only $ 2 $ to the cut (edges $ (u_j, w_j) $ and one literal edge).
    \end{itemize}
\end{itemize}

\paragraph{Step 4: Correctness}
Let $ \text{OPT}_{\text{SAT}} $ and $ \text{OPT}_{\text{Cut}} $ denote the optimal values of the Max-k-SAT and MaxCut instances, respectively. By construction:
$$
\text{OPT}_{\text{Cut}} = 3 \cdot \text{OPT}_{\text{SAT}} + nW.
$$
Since $ W = 3m + 1 $, the term $ nW $ dominates, ensuring that any optimal cut must split all $ (v_i, \bar{v}_i) $ pairs. The clause gadget contributions then directly encode satisfied clauses.

\paragraph{Step 5: Complexity}
The graph $ G $ has:
\begin{itemize}
    \item $ 2n + 2m $ vertices (2 per variable, 2 per clause).
    \item $ O(n + km) $ edges (1 edge per variable, $ 2k + 1 $ edges per clause).
\end{itemize}
The construction is polynomial in $ n $ and $ m $.
\end{proof}

\subsection{Reduction from MaxCut to Max-k-SAT}

\begin{theorem}
Any instance of MaxCut can be transformed into an equivalent instance of Max-k-SAT in polynomial time, preserving optimization structure.
\end{theorem}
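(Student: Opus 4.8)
The plan is to construct a Max-2-SAT instance directly from the graph, which is a special case of Max-k-SAT for any $k \geq 2$, using one Boolean variable per vertex to encode the bipartition. First I would, for each vertex $v_i \in V$, introduce a variable $x_i$ with the convention that $x_i = \text{True}$ places $v_i$ in $V_1$ and $x_i = \text{False}$ places it in $V_2$. A truth assignment $\mathbf{A}$ is then in one-to-one correspondence with a partition $(V_1, V_2)$ of the vertex set, and no auxiliary variables are needed.

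Next, for each edge $(v_i, v_j) \in E$, I would encode the requirement that the edge be cut (i.e., that $x_i$ and $x_j$ disagree) using the standard two-clause XOR gadget $(x_i \lor x_j) \land (\neg x_i \lor \neg x_j)$. The key observation to verify is the per-edge count: when $x_i \neq x_j$ both clauses are satisfied, contributing $2$, whereas when $x_i = x_j$ exactly one clause is satisfied, contributing $1$. Summing over all $m = |E|$ edges, a partition of cut value $C$ satisfies exactly $2C + (m - C) = m + C$ clauses.

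From this affine relation I would conclude the optimization equivalence: since $m$ is a constant independent of the assignment, maximizing the number of satisfied clauses is equivalent to maximizing the cut, yielding $\text{OPT}_{\text{SAT}} = m + \text{OPT}_{\text{Cut}}$ together with an explicit bijection between optimal cuts and optimal assignments. Finally, for the complexity claim I would note that the construction uses $n$ variables and $2m$ clauses, each of width $2$, so it is manifestly polynomial in the size of $G$ and lands in Max-2-SAT $\subseteq$ Max-k-SAT.

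I expect the main (and essentially only) obstacle to be conceptual rather than technical: the reduction must use a gadget whose satisfied-clause count is a monotone affine function of the cut size, so that the two problems share identical maximizers. The XOR gadget achieves this precisely because edge-cutting is a pairwise disagreement constraint, which is exactly why Max-2-SAT is the natural target of the reduction; the remaining work is only the routine four-case verification recorded above.
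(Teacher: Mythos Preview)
Your reduction is correct and follows the same overall template as the paper: reduce to Max-2-SAT with one Boolean variable per vertex and two clauses per edge, then read off an affine relation between satisfied clauses and cut size. The difference is in the clause gadget. You use the XOR pair $(x_i \lor x_j)$, $(\neg x_i \lor \neg x_j)$, which your four-case check correctly shows gives $2$ satisfied clauses when the edge is cut and $1$ otherwise, hence $\text{Satisfy}(\mathbf{A}) = m + C$ and $\text{OPT}_{\text{SAT}} = m + \text{OPT}_{\text{Cut}}$. The paper instead writes the pair $(x_u \lor \neg x_v)$, $(\neg x_u \lor x_v)$ and asserts that both clauses are satisfied exactly when $u,v$ lie in different partitions, with total count equal to twice the cut. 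In fact that pair encodes the equivalence $x_u \leftrightarrow x_v$: both clauses hold iff $x_u = x_v$, so the satisfied-clause count is $2m - C$, which is maximized by \emph{minimizing} the cut. Your gadget is the standard and correct one; structurally the two arguments are otherwise identical (same variable set, $2|E|$ clauses of width $2$, same polynomial-size bound), so your proposal both matches the paper's intended strategy and repairs the sign error in its gadget.
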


\begin{proof}
Given a MaxCut instance on a graph $ G(V, E) $, construct a Max-2-SAT instance as follows:

\paragraph{Step 1: Variable Representation}
For each vertex $ v \in V $, create a Boolean variable $ x_v $. The assignment $ x_v = 1 $ (resp. $ x_v = 0 $) encodes placing $ v $ in partition $ V_1 $ (resp. $ V_2 $).

\paragraph{Step 2: Clause Construction}
For each edge $ (u, v) \in E $, add two clauses:
$$
(x_u \lor \neg x_v) \quad \text{and} \quad (\neg x_u \lor x_v).
$$
These clauses are satisfied if and only if $ u $ and $ v $ are in different partitions.

\paragraph{Step 3: Correctness}
The number of satisfied clauses equals twice the size of the cut. Thus, maximizing satisfied clauses corresponds to maximizing the cut value.

\paragraph{Step 4: Complexity}
The reduction generates $ 2|E| $ clauses, which is polynomial in $ |E| $.
\end{proof}

\subsection{Conclusion}
The reductions establish that Max-k-SAT and MaxCut are polynomial-time equivalent. This equivalence provides a theoretical framework to justify extending empirical result, i.e., the logistic saturation conjecture, between the two problems.

\section{Graphs used in the main text} \label{appendix_b}

For completeness, in Fig.~\ref{fig:Fig9} we show all graphs used Sec.~\ref{suppression} and Sec.~\ref{training}. These graphs are randomly generated, subject to the condition that the edges between every pair of vertices are connected with probability $0.5$ with networkx  \cite{hagberg2008exploring}.

\begin{figure}
    \centering
    \includegraphics[width=\linewidth]{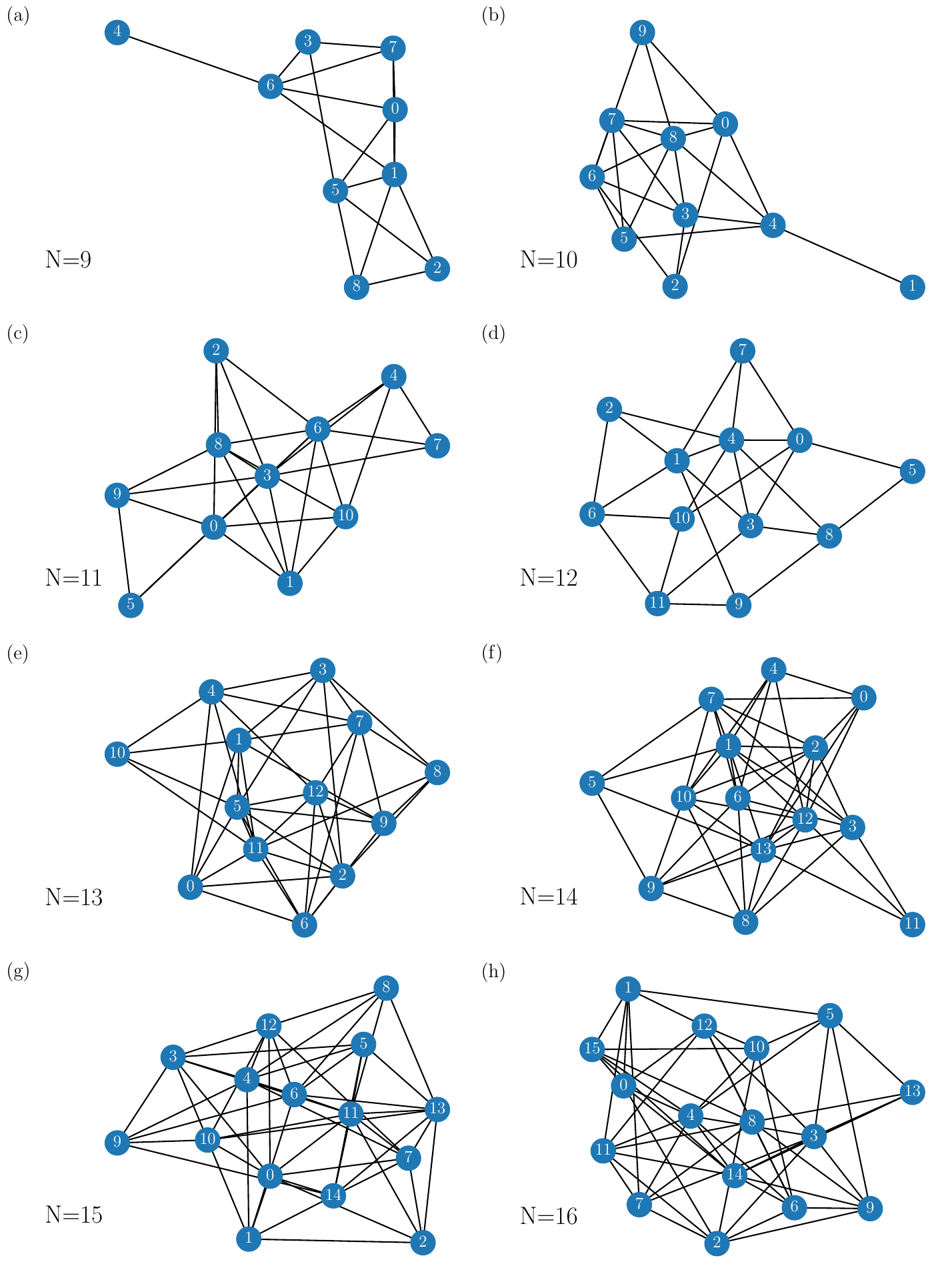}
    \caption{Graphs with vertices ranging from 9 to 16 used in this study}
    \label{fig:Fig9}
\end{figure}

\end{document}